\newtheorem{theorem}{Theorem}
\newtheorem{lemma}[theorem]{Lemma}
\newcommand{\bra}[1]{\mbox{$\langle #1 |$}}
\newcommand{\ket}[1]{\mbox{$| #1 \rangle$}}
\newcommand{\innerp}[2]{\mbox{$\langle #1 , #2 \rangle$}}
\newcommand{\ot}{\otimes}
\def\R{\mathbb{R}}
\def\N{\mathbb{N}}
\def\RR{\mathbbm{R}}
\def\A{\mathcal{A}}
\def\S{\mathcal{S}}
\def\K{\mathcal{K}}
\def\P{\mathcal{P}}
\def\Id{\mathbbm{1}}
\def\eins{\underline{1}}
\DeclareMathOperator{\tr}{tr}
\DeclareMathOperator{\Tr}{Tr}
\begin{document}

\title{All reversible dynamics in maximally non-local theories are trivial}

\author{David Gross}
\email{david.gross@itp.uni-hannover.de}
\affiliation{Institute for Theoretical Physics, Leibniz University Hannover, 30167 Hannover, Germany}
\author{Markus M\"uller}
\email{mueller@math.tu-berlin.de}
\affiliation{Institute of Mathematics, Technical University of Berlin, 10623 Berlin, Germany}
\affiliation{Institute of Physics and Astronomy, University of Potsdam, 14476 Potsdam, Germany}
\author{Roger Colbeck}
\email{colbeck@phys.ethz.ch}
\affiliation{Institute for Theoretical Physics, ETH Zurich, 8093 Zurich, Switzerland}
\affiliation{Institute of Theoretical Computer Science, ETH Zurich, 8092 Zurich, Switzerland}
\author{Oscar C. O. Dahlsten}
\email{dahlsten@phys.ethz.ch}
\affiliation{Institute for Theoretical Physics, ETH Zurich, 8093 Zurich, Switzerland}

\date{1st March 2010}

\begin{abstract} 
  A remarkable feature of quantum theory is non-locality (i.e.\ the
  presence of correlations which violate Bell inequalities).  However,
  quantum correlations are not maximally non-local, and it is natural
  to ask whether there are compelling reasons for rejecting theories
  in which stronger violations are possible. To shed light on this
  question, we consider post-quantum theories in which maximally
  non-local states (non-local boxes) occur.  It has previously been
  conjectured that the set of dynamical transformations possible in
  such theories is severely limited. We settle the question
  affirmatively in the case of reversible dynamics, by completely
  characterizing all such transformations allowed in this setting. We
  find that the dynamical group is trivial, in the sense that it is
  generated solely by local operations and permutations of systems.
  In particular, no correlations can ever be created; non-local boxes
  cannot be prepared from product states (in other words, no analogues
  of entangling unitary operations exist), and classical computers can
  efficiently simulate all such processes.
\end{abstract}

\maketitle

\emph{Introduction.}|Quantum mechanics exhibits the remarkable
feature of non-local correlations, as highlighted in Bell's seminal
paper~\cite{Bell87}.  Such correlations have (up to a few remaining
loopholes) been extensively verified in experiments~\cite{Aspect99}.

Aside from their theoretical importance, non-local correlations can be
exploited for technological use:
they are vital in entanglement-based quantum key distribution
schemes~\cite{Ekert91}, for example, where their presence can be used
to guarantee security (see also~\cite{Scarani08} for a recent review).

While quantum mechanics violates Bell inequalities, it does not do so
in the maximal possible way.  There are conceivable devices, so-called
\emph{non-local} or \emph{Popescu-Rohrlich boxes}, that permit even
stronger correlations than quantum mechanics does, while respecting
the no-signalling principle~\cite{Tsirelson,Tsirelson2,PopescuR94}.
Such correlations are not observed in nature and the question arises
as to whether other fundamental principles might be violated if they
were to exist.

There has already been some progress towards answering this
question. For example, the existence of non-local boxes would
lead to some communication complexity problems 
becoming trivial~\cite{vanDam05,BrassardBLMTU06}, the possibility of
oblivious transfer~\cite{BCUWW} and the lack of so-called information
causality~\cite{PPKSWZ}.  It has also been realized that in a theory
in which maximally Bell violating correlations emerge, the set of
possible dynamical transformations would be severely restricted
compared to those allowed in quantum theory~\cite{Barrett07}.
While a complete classification of the dynamics has remained elusive,
it has been shown, for example, that entanglement swapping is
impossible~\cite{ShortPG05,ShortBarrett09}.  Furthermore, the question
of the computational power of such a theory has been
raised~\cite{Barrett07,ShortBarrett09}.

\begin{figure}
\includegraphics[width=0.3\textwidth]{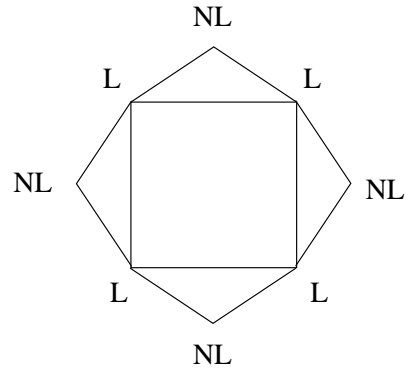}
\caption{Two-dimensional caricature of the (normalized) boxworld state
  space formed by stellating a square.  Local vertices are denoted by
  L and non-local ones by NL.  No symmetries of this object take L
  states to NL states or vice versa.}
\label{fig:1}
\end{figure}

We work in the framework of generalized probabilistic
theories~\cite{BarnumBLW07, Hardy01, Barrett07, Holevo}, adopting the
pragmatic operational view that the physical content of a theory is in
the predicted statistics of measurement outcomes given preparations
and transformations.  The framework makes minimal assumptions and
allows for mathematical rigour.  We consider a system composed of $N$
subsystems.  To each subsystem one of $M\geq 1$ measurements may be
applied, yielding one of $K\geq 2$ outcomes (in the following, unless
otherwise stated, we assume each subsystem has the same $M$ and $K$).
The state space contains all non-signalling correlations,
corresponding to so-called \emph{generalized non-signalling
  theory}~\cite{Barrett07} or, more colloquially, {\em boxworld}.

Our main result (Theorem 1) is that (except in the case $M=1$ which
corresponds to classical theory) the set of reversible transformations
in boxworld is trivial: \emph{all} such operations are a combination
of local operations on a single system (which correspond to
relabellings of measurements and their outcomes) and permutations of
local systems (which correspond to relabellings of subsystems).  This
solves the aforementioned open problem concerning the computational
power of boxworld in the case of reversible
dynamics~\cite{Barrett07,ShortBarrett09}.

Another interesting consequence is that, in boxworld, measurements and
dynamics are necessarily distinct physical processes, in the sense
that a measurement cannot be seen as a reversible dynamics on the
system comprising the state and measurement device (cf.\ quantum
theory, where the measurement process can be seen as a unitary
evolution from the point of view of an external observer).  We discuss
this further in the final section.

We note that, in the case of a classical-boxworld hybrid system,
Theorem~1 does not hold|we give an example of a CNOT operation on this
system at the end of the paper.  However, for all types of system,
including those where the number of measurements and outcomes differs
among the subsystems, reversible dynamics map pure product states to
pure product states|that is, non-local states cannot be reversibly prepared from
product states.  This is our second main result (Theorem~2).

A geometric intuition behind this result is as follows.  The state
space of the theory is a convex polytope, and reversible
transformations must map it to itself.  They therefore correspond to
symmetries of the polytope.  The polytope is in some way
\emph{stellated}, with the vertices corresponding to maximally
non-local states having a different character from local ones.  They
are hence not connected by symmetries of the polytope.  A
two-dimensional caricature is shown in Figure~\ref{fig:1}.

The presentation proceeds as follows. We begin by formally
introducing boxworld, then proceed to give the mathematical framework
we will work with. This is essentially the standard generalized
probabilistic framework, as used in~\cite{BarnumBLW07, Hardy01,
  Barrett07, Holevo}.  For clarity of exposition, in the main text we
restrict to the case of two binary measurements ($M=2$, $K=2$) and
give proofs of the main theorems for this case.  The general case is
deferred to the appendix, where the proofs are slightly more
complicated but analogous.
\bigskip

\emph{Boxworld.}|Recall that we have a system comprising $N$
subsystems and, on each subsystem, one of $M$ possible measurements
can be applied (corresponding to different measurement devices),
yielding one of $K$ possible outcomes (in the most general case, $K$
depends on the measurement).  The local measurements are denoted
$\{X_0,X_1,\ldots,X_{M-1}\}$.  A measurement on the entire system made up
of local measurements can then be described by a string $A_1\dots A_N$,
where $A_i\in\{X_0,X_1,\ldots,X_{M-1}\}$ specifies the measurement applied
to the $i$th subsystem.  Similarly, the corresponding outcomes are
denoted $a_1\dots a_N$, with $a_i\in\{0,1,\ldots,K-1\}$.
Measurement-outcome pairs are called \emph{effects}, e.g.\ a
measurement of $X_1$ giving outcome $3$.  A \emph{state} is then a
function $P: (a_1\dots a_N | A_1\dots A_N)\mapsto[0,1]$, which gives
the probability of the effect that $A_1\dots A_N$ is measured and
gives outcomes $a_1\dots a_N$.  More general measurements are
possible: a measurement is a collection of effects for which the sum
of the outcome probabilities over the collection is 1 when acting on
any state.  Such measurements include procedures whereby the
measurement performed on a particular subsystem depends on the outcomes
of previous measurements, convex combinations of such procedures and
more~\cite{ShortBarrett09}.  However, the statistics of the local
measurements $A_1\dots A_N$ are sufficient to uniquely determine the
outcome probabilities of all measurements, and hence can be used to
specify the state.  This non-trivial assumption is known as the local
observability principle~\cite{DAriano07}.

Furthermore, the subsystems can be spatially separated, and hence we
require that $P$ satisfies the \emph{non-signalling} conditions, i.e.\
that
\begin{eqnarray}\label{eqn:nos}
\sum_{a_i=0}^{K-1} P(a_1,\ldots,a_i,\ldots,a_N|A_1,\ldots,A_i,\ldots,A_N)
\end{eqnarray}
is independent of $A_i$.  This implies that the marginal
distribution on some set of subsystems is independent of the choice of
measurement(s) on other subsystems.

Boxworld is a physical theory whose state space consists of any $P$
subject to: {(\it i)} $P$ takes values in $[0,1]$; {\it (ii)} $P$ is
normalized in the obvious sense; and {\it (iii)} $P$ satisfies the
non-signalling conditions (\ref{eqn:nos}).  The constraints {\it (i)}
-- {\it (iii)} are such that the state space is a convex polytope,
which turns out to have a non-trivial structure.

We first deal with the special case $M=K=2$ (the case of so-called
\emph{gbits}~\cite{Barrett07}).  The corresponding state spaces
(defined below) contain interesting non-local states, for example,
non-local boxes with maximal Bell violating correlations.  We label
the measurements $X_0=X$ and $X_1=Z$.\bigskip

\bigskip {\it Mathematical Framework.}|We work in the
\emph{generalized probabilistic framework} (see e.g.\
\cite{BarnumBLW07, Hardy01, Barrett07, Holevo}).  Here, states are
represented as vectors embedded in a real vector space.  Effects will
also be represented as vectors, such that the probabilities of
outcomes will be given by inner products between the relevant vectors.
We begin with the case of a single system ($N=1$). We choose three
linearly independent vectors $X, Z, \eins\in \RR^3$.  The vector $X$
is identified with $(1|X)$, which is the effect that the $X$
measurement gives outcome $1$.  We define a vector $\neg X := \eins -
X$ and associate it with $(0|X)$. The prefix $\neg$ may be interpreted
as a negation.  Lastly, the $\neg Z$ effect is defined analogously as
$\neg Z:=\eins-Z$.  Because $X, Z, \eins$ are linearly independent,
for every state $P$, there is a unique vector $s\in\RR^3$ representing
$P$ in the sense that
\begin{equation*}
	\langle X, s \rangle = P(1|X),\quad
	\langle Z, s \rangle = P(1|Z),\quad
	\langle \eins, s \rangle = 1.
\end{equation*}
It follows that $\langle \neg X, s \rangle = P(0|X)$ and likewise for
$Z$. We will refer to the set $\P^{(1)}=\{X, \neg X, Z, \neg Z\}$ as
the single-site \emph{extremal effects}, for reasons that will become
clear below. (Note that the quantum analogue of our effect vectors are
projectors, and the inner product is analogous to the Hilbert-Schmidt
scalar product, mapping states, $\rho$, and projectors, $\Pi$, to
probabilities, $\Tr(\rho \Pi)$.)

The $N$-subsystem extremal effects $\P^{(N)}$ are defined to be the
tensor products $A_1\otimes\dots\otimes A_N$, where $A_i\in\P^{(1)}$
(the reason for this definition is that it recovers the full set of
non-signalling distributions for the state space, as will be shown in
Lemma 1).  We further define the identity on $N$ sites,
$\eins^{(N)}:=\eins\otimes\ldots\otimes\eins$.  A central object is
the convex cone $\K^{(N)}$ generated by $\P^{(N)}$~\cite{Aliprantis}.
This cone is the collection of all vectors which can be
written as a linear combination of elements of $\P^{(N)}$ with
non-negative coefficients.  For any convex cone $\K$, one can
associate a \emph{dual cone} $\K^*=\{s\,|\,\langle A, s\rangle \geq
0\,\forall\, A\in\K\}$.  We will identify this with the set of
unnormalized states.

Our interest in cones and duality stems from the following lemma, which
characterizes the state space of boxworld in terms of the cone
$\K^{(N)}$.  It also implies the well-known result that there are no
entangled effects in boxworld.

\emph{Lemma 1.}|Let $\S^{(N)}$ be the set of vectors $s$ in the dual
cone $(\K^{(N)})^{*}$ which satisfy $\langle \eins^{(N)},s \rangle =
1$.  The space of (normalized) states in boxworld can be
represented by $\S^{(N)}$.

\begin{proof}
  We use the notation $\neg^0 A:=A$ and $\neg^1 A:=\neg A$ for
  $A\in 
  \{X,Z\}$.  The vectors $s\in \S^{(N)}$ will henceforth be called
  \emph{states}; they satisfy $\langle \eins^{(N)},s\rangle=1$ and
  $\langle B,s\rangle\geq 0$ for all $B\in \K^{(N)}$.  To every state
  $s$, we associate a probability distribution $P$ via
\[
   P(a_1,\ldots,a_N|A_1,\ldots,A_N):=\langle \neg^{a_1} A_1\otimes\ldots \otimes \neg^{a_N} A_N,s\rangle
\]
for $A_i\in\{X,Z\}$ and $a_i\in\{0,1\}$. First we show that every such
$P$ is a valid non-signalling probability distribution. By definition,
$P$ is non-negative. To see that it is normalized, note that we can
decompose the identity
\begin{equation}\label{eq:id1}
   \eins^{(N)}=\sum_{x\in\{0,1\}^N} \neg^{x_1} A_1 \otimes \neg^{x_2} A_2 \otimes \ldots \otimes \neg^{x_N} A_N
\end{equation}
for any choices of $A_i\in\{X,Z\}$, so that
\[
\sum_{a_1,\ldots,a_N} P(a_1,\ldots,
a_N|A_1,\ldots,A_N)=\innerp{\eins^{(N)}}{s}=1.
\]
To see that $P$ is non-signalling, consider
\begin{eqnarray*}
&&\sum_{a_i}P(a_1,\ldots,a_i,\ldots,a_N|A_1,\ldots,A_N) \\
&=&\innerp{\neg^{a_1}A_1\otimes\ldots\otimes A_i \otimes \ldots \otimes \neg^{a_N} A_N}{s}\\
&&+\innerp{\neg^{a_1}A_1\otimes\ldots\otimes \neg A_i \otimes \ldots \otimes \neg^{a_N} A_N}{s} \\
&=&\innerp{\neg^{a_1}A_1\otimes\ldots\otimes \eins \otimes \ldots \otimes \neg^{a_N} A_N}{s}
\end{eqnarray*}
which is independent of $A_i$.

To show that every non-signalling distribution has an associated
state, note that there is a unique vector $s\in(\R^3)^{\ot N}$ such
that $\innerp{\neg^{a_1}A_1\ot\ldots\ot
  \neg^{a_N}A_N}{s}=P(a_1,\ldots,a_N|A_1,\ldots,A_N)$ for
$\neg^{a_i}A_i\in\{X,Z,\neg X\}$ (since these effects span the
space). It is then easy to see that the no-signalling property
enforces consistency also in the case that $\neg^{a_i} A_i=\neg Z$ for
some $i$.  Non-negativity and normalization follow directly from the
corresponding statements for the probability distribution $P$.
\end{proof}

\bigskip

\emph{Transformations.}|We now consider transformations in boxworld.
First note that all allowed dynamical transformations in general
probabilistic theories (reversible or not) are linear|this follows
from the fact that they have to respect convex combinations, which
correspond to probabilistic mixtures.
For a general proof of this fact see~\cite{Barrett07}.

The {\em allowed} transformations, $T$, are defined to be linear maps
with the property that for all $s\in \S^{(N)}$, $Ts\in \S^{(N)}$.  A
transformation is {\em reversible} if both $T$ and $T^{-1}$ are
allowed transformations.  It follows that a reversible transformation
maps the state space $\S^{(N)}$ bijectively onto itself.  Furthermore,
since $T$ is a linear map, it is also the case that $T$ maps extremal
states to extremal states.  (More generally, one would only consider a
transformation allowed if $T\ot\Id$ is also allowed|a condition
analogous to complete positivity in quantum theory.  However, our
result applies without this additional requirement.)

Note that the states $s\in\S^{(N)}$ themselves do not have a physical
meaning|only their scalar products with effects do, i.e.\ $\langle A,
s \rangle$ (which are probabilities).  Since $\langle A,Ts\rangle =
\langle T^\dagger A,s \rangle$, the dynamics may equivalently be
specified by means of the adjoint map $T^\dagger$.  (In quantum
theory, the analogue is passing from the Schr\"odinger to the
Heisenberg picture.)  Reversible transformations, $T$, map the state
space bijectively onto itself and, likewise, the adjoint
transformations $T^\dagger$ act accordingly on the cone of effects
$\K^{(N)}$.

\emph{Lemma 2.}|Adjoint reversible transformations $T^\dagger$ map the
cone of effects $\K^{(N)}$ bijectively onto itself. Moreover, they map
the set of {\em extremal effects}, $\P^{(N)}$, onto itself.
\begin{proof}
  Any vector $t\in \K^{(N)*}$ can be written $t=\lambda s$ for some
  $\lambda\geq 0$ and some $s\in\S^{(N)}$.
  Then, for any outcome $B\in \K^{(N)}$, we have
  \[
  \langle T^\dagger B, t\rangle=\langle B,T t\rangle= \lambda\langle
  B,Ts\rangle\geq 0,
  \]
  since $Ts\in\S^{(N)}$. From the definition of the dual cone, it
  follows that $T^\dagger B\in \left(\K^{(N)}\right)^{**}=\K^{(N)}$
  (note that $\K^{**}=\K$ for every closed convex cone $\K$
  (cf.~\cite{Aliprantis})).  Therefore, $T^\dagger$ maps the cone
  of effects $\K^{(N)}$ into itself.  The same argument applies to
  the inverse $(T^\dagger)^{-1}=(T^{-1})^\dagger$, hence the cone is
  mapped bijectively onto itself.

  Since it is a convex cone, $\K^{(N)}$ is completely characterized by
  its extremal rays.  By linearity, $T^\dagger$ maps the extremal rays
  of $\K^{(N)}$ onto themselves.  From the definition of $\K^{(N)}$,
  we know that the cone is the convex hull of the $4^N$ rays formed by
  all $A\in\P^{(N)}$.  It is elementary to check that these are indeed
  the extremal rays. Therefore, for every $A\in\P^{(N)}$, there exists
  an $A'\in\P^{(N)}$ and a non-negative number $\lambda$ such that
  $T^\dagger(A)=\lambda A'$. To see that $\lambda$ must equal $1$,
  observe that for every $B\in\P^{(N)}$, there exist (product) states
  $s_0, s_1\in\S^{(N)}$ such that $\langle s_0, B\rangle = 0$ and
  $\langle s_1, B \rangle=1$. Since this holds in particular for both
  $A$ and $A'$, it follows that $\lambda=1$.
\end{proof}

\emph{Orthogonal representation of transformations.}|There are $4^N$
extremal effects, and thus $4^N!$ permutations acting on $\P^{(N)}$.
We go on to show that only a tiny fraction of those is actually
realizable in boxworld.  It will be convenient to use a specific
representation of $X$, $Z$ and $\eins$:
\begin{eqnarray*}
	X &=& (1/2,1/\sqrt 2,0), \ \ 
	Z = (1/2,0,1/\sqrt 2), \ \
	\eins = (1,0,0).
\end{eqnarray*}

\emph{Lemma 3.}|With respect to the representation above, it holds
that any reversible transformation $T$ is orthogonal, i.e.\ on $N$
subsystems, $T^\dagger T=\Id_{3^N}$, where $\Id_d$ is the
$d$-dimensional identity matrix.
\begin{proof}
  First observe that with this choice,
  $\sum_{A\in\P^{(1)}}\ket{A}\bra{A}=\Id_3$ and hence (since
  $\P^{(N)}$ factorizes)
  $\sum_{A\in\P^{(N)}}\ket{A}\bra{A}=\Id_{3^N}$.  Then, since $T^\dagger$
  permutes the extremal effects,
  $T^\dagger T=T^\dagger \left(\sum_{A\in\P^{(N)}}\ket{A}\bra{A}\right)T
  =\sum_{A\in\P^{(N)}}\ket{A}\bra{A}=\Id_{3^N}$.
\end{proof}

The fact that $T$ (and thus $T^\dagger$) is orthogonal, gives rise to
a host of invariants.  If one picks any two extremal effects $Q,
R\in\P^{(N)}$, then clearly their inner product is a conserved
quantity: $\innerp QR = \innerp{T^\dagger Q}{T^\dagger R}$. However,
$|\innerp{Q}{R}|=4^{-N} 3^{N-d_H(Q,R)}$, where $d_H(Q,R)$ is the
\emph{Hamming distance} between $Q$ and $R$, i.e.\ the number of
places at which $Q$ and $R$ differ. Thus the Hamming distance of
extremal effects is a conserved quantity: $d_H(Q,R)=d_H(T^\dagger Q,
T^\dagger R)$.

It is well-known in the theory of error
correction~\cite{HammingDistance} that the set of maps on finite
strings which preserve the Hamming distance is highly restricted: the
group of those maps is generated by local transformations and
permutations of sites only (for a proof, see the Appendix).
Thus $T^\dagger$ acts as such an operation on $\P^{(N)}$.  Moreover,
since the states in $\P^{(N)}$ span the entire space, the action on
this set is sufficient to completely specify $T^\dagger$.

Furthermore, it is straightforward to show that the set of allowed
local operations comprises exchanging $X$ and $Z$ (relabelling
measurements), exchanging $X$ and $\neg X$ (relabelling the outcome
upon input $X$), exchanging $Z$ and $\neg Z$ (relabelling the outcome
upon input $Z$) and combinations thereof (see Lemma~\ref{lem:9} in the
Appendix, for a proof in the general case).

\bigskip

\emph{Main results.}|Combining all the previous results proves the
following theorem in the special case of $M=2$ measurements with $K=2$
outcomes (a full proof for all $M\geq 2$ and $K\geq 2$ is given in the
appendix):

\bigskip

\emph{Theorem 1.}|Every reversible transformation on a system comprising
$N$ subsystems in boxworld, with $M\geq 2$ measurements at every subsystem
each having $K\geq 2$ outcomes, is a permutation of subsystems, followed
by local relabellings of measurements and their outcomes.\bigskip

Furthermore, we show the following:
\bigskip

\emph{Theorem 2.}|In boxworld, every reversible transformation maps
pure product states to pure product states. This is true even if the
system is coupled to an arbitrary number of classical systems, and if
the number of devices and outcomes varies from subsystem to subsystem.
\bigskip

Before giving the proof, we need to slightly extend the notion of
outcome vectors to the general case.  We denote the set of extremal
effects for the $i$th subsystem by $\P^i=\{X^i_m(k)\}$, where $m$
labels the measurements (the number of different $m$s may depend on
$i$) and $k$ the corresponding outcomes (the number of different $k$s
may depend on $m$ and on $i$).  These vectors satisfy $\sum_k
X^i_m(k)=\eins^i$, where $\eins^i$ represents the identity. Except for
these relations, no linear dependencies occur.

The identity on the full system is then
$\eins^{(N)}:=\eins^1\otimes\ldots\otimes \eins^N$, and the extremal
effects are $\P^{(N)}:=\P^1\otimes\ldots\otimes\P^N$. The convex cone
$\K^{(N)}$ and the state space $\S^{(N)}$ are defined analogously to
the binary case previously described. The statements and proofs of
Lemmas~1 and~2 remain valid in this more general case, hence, in
particular, adjoint reversible transformations map $\P^{(N)}$ onto
itself.

\begin{proof}
  To complete the proof of Theorem~2, note that a state $s\in\S^{(N)}$
  is a pure product state (that is, of the form
  $s=s_1\otimes\ldots\otimes s_N$, where all $s_i$ are pure) if and
  only if $\langle A,s\rangle\in\{0,1\}$ for all extremal effects
  $A\in\P^{(N)}$ (a proof is given in
  Lemma~\ref{lem:extremal01}). Suppose that $s$ is a pure product
  state and $T$ a reversible transformation, then
\[
   \langle A,Ts\rangle=\langle T^\dagger A,s\rangle\in\{0,1\}\quad\mbox{for all }A\in\P^{(N)},
\]
which proves that $Ts$ must also be a pure product state.
\end{proof}

Note that Theorem~1 does not, in general, apply to the case of
site-dependent numbers of measurements.  For example, suppose that we have
two sites, where the first has two binary measurements, $X$ and $Z$,
and the second allows only a single binary measurement, $Y$. (In other
words, a gbit is coupled to a classical bit.)  It is then
straightforward to construct a reversible CNOT operation, where the
classical bit is the control bit. For example, there is an adjoint
reversible transformation that acts as
\[
   A\otimes Y \mapsto A \otimes Y,\qquad A\otimes \neg Y \mapsto \neg A \otimes \neg Y
\]
for all $A\in\{X,Z,\neg X,\neg Z\}$. 

In the case of a system composed of several classical subsystems,
Theorem~1 also does not hold|the dynamics in such a case is
non-trivial.  Nevertheless, Theorem~2 does apply to this case|it
remains impossible to prepare entangled states from separable ones.

\bigskip

\emph{Conclusions.}|We have shown that the set of reversible
operations in boxworld is trivial: the only possible operations
relabel subsystems, local measurements and their outcomes.  In
particular, there is no boxworld analogue of an entangling unitary in
quantum theory, one cannot reversibly prepare non-local states from
separable ones, nor perform useful computations reversibly.

In addition, the results have consequences for the interplay between
dynamics and measurements in boxworld: suppose we have a system
comprising a particle, $A$, and two observers, $B$ and $C$, initially
in an uncorrelated tripartite product state.  In quantum theory, if
$B$ measures $A$, but $C$ does not take part in the interaction, then
$C$ can model the corresponding dynamics by a unitary transformation
on the $AB$-system.  That is, $C$ can view the whole interaction as
reversible while retaining the ability to correctly predict the
outcome probabilities of any future measurements.  (Theories with such
a property might be called \emph{fundamentally reversible}.)
In boxworld, on the other hand, this is not true: $B$'s measurement on
$A$ would have to create correlations between $A$ and $B$, but this
could never be achieved by a reversible transformation.  Hence $C$
would have to model the $AB$-measurement using irreversible dynamics,
even if $C$ did not take part in the interaction itself.

It would be interesting to extend our result to explore which state
spaces are compatible with fundamentally reversible theories in this
sense, or with theories that are \emph{transitive}, i.e.\ that every
pure state can be reversibly mapped to any other. This property has
been used by Hardy as an axiom for quantum theory~\cite{Hardy01}.
Both conditions seem to strongly restrict the possible geometry of the
state space, and an interesting open question is how non-local such
theories can be.

\bigskip

\emph{Acknowledgments.}|We are grateful to Howard Barnum, Volkher
Scholz, and Reinhard Werner for interesting discussions and to Nicholas
Harrigan for comments which improved the presentation.  OD would like
to thank Matthew Leifer for introducing him to generalized
probabilistic theories, and MM would like to thank Christopher Witte
for the same reason and for the quadratic cocktail model of a gbit.
RC and OD acknowledge support from the Swiss National Science
Foundation (grant No. 200021-119868). DG's research is supported by
the EU (CORNER).

\section*{Appendix}
This appendix contains the proof of Theorem~1: \emph{Every reversible
  transformation on a system comprising $N$ subsystems in boxworld,
  with $M\geq 2$ measurements at every subsystem each having $K\geq 2$
  outcomes, is a permutation of subsystems, followed by local
  operations.}

The proof idea is the same as in the case $M=K=2$: find a particular
representation of the vectors $X_m(k)$ and $\eins$ (corresponding to
the previous vectors $X$, $Z$, $\neg X$ and $\neg Z$ and $\eins$) such
that reversible transformations are orthogonal, and such that the
scalar products of those vectors yield useful invariants.  We recall
that the dual of reversible transformations preserve the cone of
effects and so permute extremal effects (this is Lemma~2 applied to
this case).

We start with the following observation:
\setcounter{theorem}{3}
\begin{lemma}For every $N\in\N$, there exist unit vectors
  $\{w_i\}_{i=1}^{N+1}$ in $\R^N$ with the properties
\begin{itemize}
\item $\langle w_i,w_j\rangle=-\frac 1 N$ if $i\neq j$,
\item $\displaystyle \sum_{i=1}^{N+1}w_i=0$, and
\item $\displaystyle \frac 1 {N+1} \sum_{i=1}^{N+1}|w_i\rangle\langle w_i|=\frac 1 N \Id_N$.
\end{itemize}
\end{lemma}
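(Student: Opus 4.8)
The plan is to realize the $\{w_i\}$ as the vertices of a regular $N$-simplex centred at the origin, obtained by projecting the standard basis of one higher dimension onto a hyperplane. Concretely, I would work in $\R^{N+1}$ with orthonormal basis $e_1,\dots,e_{N+1}$, let $c=\tfrac{1}{N+1}\sum_{i=1}^{N+1}e_i$ be the centroid, and set $v_i:=e_i-c$. Each $v_i$ lies in the hyperplane $H:=\{x\in\R^{N+1}:\sum_j x_j=0\}$, which is $N$-dimensional and isometric to $\R^N$; the desired vectors are the normalizations $w_i:=v_i/\|v_i\|$, viewed as elements of $H\cong\R^N$.

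With this construction fixed, the first two bullets reduce to a direct Gram computation. Expanding $\langle v_i,v_j\rangle=\delta_{ij}-2\langle e_i,c\rangle+\|c\|^2$ and using $\langle e_i,c\rangle=\|c\|^2=\tfrac{1}{N+1}$ gives $\|v_i\|^2=\tfrac{N}{N+1}$ and $\langle v_i,v_j\rangle=-\tfrac{1}{N+1}$ for $i\neq j$. After normalizing, the common length becomes $1$ and the off-diagonal inner product becomes $\langle w_i,w_j\rangle=-\tfrac{1}{N}$, which is the first bullet. The second bullet is immediate, since $\sum_i v_i=\sum_i e_i-(N+1)c=0$ and normalization merely rescales by the common factor $\|v_i\|^{-1}$.

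The step I expect to be the crux is the third bullet, the tight-frame (resolution-of-identity) condition. The clean way to see it is to compute the frame operator $S:=\sum_{i=1}^{N+1}|v_i\rangle\langle v_i|$ as an operator on all of $\R^{N+1}$ rather than on $H$ directly. Expanding the outer products and using $\sum_i e_i=(N+1)c$ together with $(N+1)|c\rangle\langle c|=|\hat c\rangle\langle\hat c|$, where $\hat c=c/\|c\|$ is the unit normal to $H$, the cross terms collapse and one finds $S=\Id_{N+1}-|\hat c\rangle\langle\hat c|$, i.e.\ exactly the orthogonal projector $P_H$ onto $H$. Restricting to $H\cong\R^N$, $P_H$ acts as $\Id_N$; inserting the normalization factor $\|v_i\|^{-2}=\tfrac{N+1}{N}$ then yields $\sum_i|w_i\rangle\langle w_i|=\tfrac{N+1}{N}\Id_N$, so that $\tfrac{1}{N+1}\sum_i|w_i\rangle\langle w_i|=\tfrac{1}{N}\Id_N$, as claimed.

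As a sanity check and alternative route, I would note that the three properties are in fact linked. Writing $W$ for the $N\times(N+1)$ matrix with columns $w_i$, the first bullet fixes the Gram matrix $G=W^\top W=\tfrac{N+1}{N}\Id_{N+1}-\tfrac{1}{N}J$, with $J$ the all-ones matrix, whose spectrum is $0$ once and $\tfrac{N+1}{N}$ with multiplicity $N$. The vanishing eigenvector is the all-ones vector $\mathbf 1$, which forces $\|\sum_i w_i\|^2=\mathbf 1^\top G\,\mathbf 1=0$ and hence the second bullet; and since $WW^\top$ shares the nonzero spectrum of $W^\top W$, the frame operator $\sum_i|w_i\rangle\langle w_i|=WW^\top$ must equal $\tfrac{N+1}{N}\Id_N$, recovering the third. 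This also settles existence, since $G$ is positive semidefinite of rank $N$ and therefore is the Gram matrix of some vectors in $\R^N$.
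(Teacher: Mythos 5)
Your proof is correct, and it starts from the same construction as the paper: the simplex vertices $v_i=e_i-c$ in $\R^{N+1}$, normalized and identified isometrically with vectors in the hyperplane $H=\{x\in\R^{N+1}:\sum_j x_j=0\}\cong\R^N$. The first two bullets are handled essentially identically, so the comparison concerns the third, tight-frame property. There the paper uses a different trick: it verifies the operator identity by checking that the Schatten 2-norm $\bigl\|\tfrac{1}{N+1}\sum_i|w_i\rangle\langle w_i|-\tfrac1N\Id_N\bigr\|_2^2$ vanishes, a computation that reduces entirely to the already-established scalar products $\langle w_i,w_j\rangle$ and therefore never needs coordinates for the embedded $w_i$. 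You instead expand the frame operator in the ambient space and identify $\sum_i|v_i\rangle\langle v_i|=\Id_{N+1}-|\hat c\rangle\langle\hat c|$ as the orthogonal projector onto $H$, which restricts to the identity on $H$; this is equally rigorous and arguably more transparent about \emph{why} the identity holds (resolution of identity minus the normal direction). Your alternative Gram-matrix argument is the most genuinely different piece: reading off the spectrum of $G=\tfrac{N+1}{N}\Id_{N+1}-\tfrac1N J$ and using that $WW^\top$ and $W^\top W$ share their nonzero eigenvalues yields the second and third bullets directly from the first, and it settles existence abstractly (a rank-$N$ positive semidefinite matrix is the Gram matrix of $N+1$ vectors in $\R^N$) without any explicit embedding --- a self-contained route the paper does not take. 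All of your computations check out.
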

\begin{proof}
  Rather than giving the vectors explicitly, we construct them
  implicitly from the standard $N$-simplex in $\R^{N+1}$: let $e_i$ be
  the $i$th standard unit vector in $\R^{N+1}$, and $c$ the center of
  those vectors, that is $c:=\frac 1 {N+1}\sum_{i=1}^{N+1}e_i$.
  Define $v_i:=e_i-c$, so that the angles between those vectors
  ($i\neq j$) are $\frac{\langle v_i,v_j\rangle}{\|v_i\|\,\|v_j\|} =
  -\frac 1 N$.  By construction, we have $\sum_{i=1}^{N+1}v_i=0$, so
  the vectors are linearly dependent.  The set $\{w_i\}_{i=1}^{N+1}$
  are then the vectors resulting from embedding normalized versions of
  the $v_i$s isometrically into $\R^N$.  The first two claimed
  equalities follow immediately.  The third can be confirmed by
  computing\footnote{The Schatten 2-norm is defined
    by $\|A\|^2_2:=\tr(AA^\dagger)$.}
\[
   \left\| \frac 1 {N+1}\sum_{i=1}^{N+1}\ket{w_i}\bra{w_i}-\frac 1 N \Id_N\right\|_2^2=0,
\]
which involves only scalar products of the form $\langle w_i,w_j\rangle$.
\end{proof}

The vectors $X_m(k)$ representing the $k$th measurement outcomes for
the $m$th measurement (counting from zero) can be constructed as follows:
\begin{itemize}
\item Choose $\eins\neq0$ arbitrarily,
\item choose $X_m(0),X_m(1),\ldots,X_m(K-2)$ for all
  $m$ such that all obtained vectors are linearly independent,
\item define $X_m(K-1)$ as
  $\eins-\sum_{k=0}^{K-2}X_m(k)$.
\end{itemize}
We choose these in a particular way in order to simplify the
subsequent argument:  the single-site effects will be vectors in
$\R^{M(K-1)+1}=(\R^M\otimes \R^{K-1})\oplus \R$.  Let $e_m$
denote the $m$th standard unit vector in $\R^M$, and $\eins$ be
the unit vector on the direct sum space, $\R$. Then, define
\begin{equation}
   X_m(k):=\sqrt{\frac{M(K-1)}{K^2}} e_{m+1}\otimes w_{k+1}+\frac 1 K \eins
   \label{eq:repXm}
\end{equation}
for $0\leq m\leq M-1$ and $0\leq k \leq K-1$.
Useful properties of these vectors are given in the following lemma.

\begin{lemma}In the representation given above, we have
\begin{eqnarray}
   \sum_{k=0}^{K-1} X_m(k)&=&\eins,\label{eq:decomp}\\
   \sum_{m=0}^{M-1} \sum_{k=0}^{K-1} \ket{X_m(k)}\bra{X_m(k)} &=&
   \frac M K \Id,\ \text{ and}
\end{eqnarray}
$$\langle X_m(k),X_{m'}(k')\rangle=\frac 1 {K^2}\left\{
      \begin{array}{cl}
         1 & m\neq m' \\
         1-M & m=m',k\neq k'\\
         1+M(K-1) & m=m',k=k'.
      \end{array}
   \right.
$$
Moreover, reversible transformations are orthogonal with respect to
this representation.
\end{lemma}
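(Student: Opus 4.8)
The plan is to verify the three stated identities by direct computation from the explicit representation in~\eqref{eq:repXm}, relying on the properties of the vectors $w_i$ from the previous lemma, and then to deduce orthogonality of reversible transformations by the same argument as in Lemma~3.

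First I would establish~\eqref{eq:decomp}. Summing $X_m(k)$ over $k$ from $0$ to $K-1$, the $\eins$-component contributes $K\cdot\frac 1 K\eins=\eins$, while the component in $\R^M\otimes\R^{K-1}$ is proportional to $e_{m+1}\otimes\sum_{k=1}^{K}w_k$, which vanishes because $\sum_{i=1}^{N+1}w_i=0$ (here with $N=K-1$). This immediately gives $\sum_k X_m(k)=\eins$, confirming that the construction is consistent with the defining relation $\sum_k X_m(k)=\eins^i$.

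Next I would compute the inner products $\langle X_m(k),X_{m'}(k')\rangle$. Expanding the product of two expressions of the form~\eqref{eq:repXm}, the cross terms between the $\R^M\otimes\R^{K-1}$ part and the $\R$ part drop out by orthogonality of the direct summands, leaving $\frac{M(K-1)}{K^2}\langle e_{m+1},e_{m'+1}\rangle\langle w_{k+1},w_{k'+1}\rangle+\frac 1 {K^2}$. One then splits into cases: when $m\neq m'$ the first term vanishes and only $\frac 1{K^2}$ survives; when $m=m'$ but $k\neq k'$ one uses $\langle w_{k+1},w_{k'+1}\rangle=-\frac 1{K-1}$, giving $\frac{M(K-1)}{K^2}\cdot\left(-\frac 1{K-1}\right)+\frac 1{K^2}=\frac{1-M}{K^2}$; and when $m=m'$, $k=k'$ one uses $\|w_{k+1}\|^2=1$, giving $\frac{M(K-1)}{K^2}+\frac 1{K^2}=\frac{1+M(K-1)}{K^2}$. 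These match the three stated values. The operator identity $\sum_{m,k}\ket{X_m(k)}\bra{X_m(k)}=\frac M K\Id$ follows similarly: the $\eins$ part contributes $\frac 1{K^2}\cdot MK\,\ket\eins\bra\eins$, the cross terms sum to zero after summing $w$ over the full set, and the $\R^M\otimes\R^{K-1}$ part yields $\frac{M(K-1)}{K^2}\sum_m\ket{e_{m+1}}\bra{e_{m+1}}\otimes\sum_k\ket{w_{k+1}}\bra{w_{k+1}}$, where the third property of the previous lemma replaces $\sum_k\ket{w_{k+1}}\bra{w_{k+1}}$ by $\frac K{K-1}\Id_{K-1}$; collecting constants reproduces $\frac M K\Id$ on the whole space.

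Finally, orthogonality is obtained exactly as in Lemma~3: since $\P^{(N)}$ factorizes as a tensor product, the single-site resolution $\sum_{m,k}\ket{X_m(k)}\bra{X_m(k)}=\frac M K\Id$ gives $\sum_{A\in\P^{(N)}}\ket A\bra A=(M/K)^N\Id$, a multiple of the identity; because the adjoint $T^\dagger$ merely permutes the extremal effects $\P^{(N)}$ (Lemma~2, which holds in the general case), conjugating this resolution by $T$ leaves it invariant, so $T^\dagger T$ equals the same multiple of the identity, whence $T$ is orthogonal after rescaling. I expect the only mildly delicate step to be bookkeeping the normalization constant $\sqrt{M(K-1)/K^2}$ through the operator identity and confirming the cross terms vanish; the case analysis for the scalar products is routine once the $w_i$ relations are in hand.
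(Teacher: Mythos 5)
Your proposal follows exactly the paper's route: the paper's own proof of this lemma consists of the single remark that the three identities ``can be verified by direct calculation'' and that orthogonality is ``the extension of Lemma~3,'' and your computations (the vanishing of $\sum_k w_{k+1}$ for the decomposition of $\eins$, the case analysis for the Gram matrix using $\langle w_i,w_j\rangle=-\tfrac{1}{K-1}$, and the resolution $\sum_{m,k}\ket{X_m(k)}\bra{X_m(k)}=\tfrac MK\Id$) carry this out correctly. One slip in your last step, however: conjugating $\sum_{A\in\P^{(N)}}\ket{A}\bra{A}=(M/K)^N\Id$ by $T$ gives $(M/K)^N\,T^\dagger T$, so invariance of the sum under the permutation of $\P^{(N)}$ forces the constant to cancel, yielding $T^\dagger T=\Id$ exactly; the conclusion is that $T$ is orthogonal outright, not ``orthogonal after rescaling,'' and this exact orthogonality is what the lemma asserts and what the subsequent inner-product (Hamming-distance) invariance arguments require.
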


\begin{proof}
  The three equations can be verified by direct calculation.  That
  reversible transformations are orthogonal is simply the extension of
  Lemma~3 to the present case.
\end{proof}

We remark that the inner product $1-M$ for $m=m',k\neq k'$ is the
reason why Theorem 2 does not hold in the case of classical systems
($M=1$).  
In the following, we will assume that $M\geq 2$.

We now consider an adjoint reversible transformation, $T^\dagger$.
Note that for all $s\in\S^{(N)}$,
$$1=\innerp{\eins^{(N)}}{Ts}=\innerp{T^\dagger\eins^{(N)}}{s},$$ from which it
follows that $T^\dagger\eins^{(N)}=\eins^{(N)}$.  Moreover, we have
the following property:

\begin{lemma}\label{lem:6}Let $Q,R\in\P^{(N)}$ be two extremal effects that
  differ at exactly one site, and let $T^\dagger$ be an adjoint
  reversible transformation. Then, $T^\dagger Q$ and $T^\dagger R$
  also differ at exactly one site.
\end{lemma}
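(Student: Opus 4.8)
The plan is to reformulate the claim geometrically and then prove it by tracking how measurement fibres are transported by $T^\dagger$. Call a \emph{line} any set of $MK$ extremal effects that agree in all tensor factors except one site $i$, where they range over the whole of $\P^i$; two effects differ at exactly one site precisely when they are distinct members of a common line. So the lemma is equivalent to the assertion that $T^\dagger$ maps lines bijectively onto lines. Throughout I would use that $T^\dagger$ is orthogonal (previous lemma), hence preserves all inner products $\innerp{Q}{R}$ and permutes $\P^{(N)}$. Writing $\alpha:=1+M(K-1)$ and $\beta:=1-M$, the factorisation of effects across sites together with the single-site inner products gives $\innerp{Q}{R}=\alpha^{N-t}\beta^{q}/K^{2N}$, where $t$ is the number of sites at which $Q,R$ differ and $q\le t$ counts those differing sites that carry the same measurement (so differ only in outcome).

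First I would pin down the pairs that differ at one site \emph{through an outcome}. Since $M\ge 2$ we have $\beta<0$ and $\alpha-|\beta|=2+M(K-2)>0$, so among all distinct extremal effects $\innerp{Q}{R}$ attains its \emph{minimal} value exactly at $t=q=1$, namely $\beta\,\alpha^{N-1}/K^{2N}$: any negative value needs $q$ odd, and its magnitude $\alpha^{N-t}|\beta|^{q}$ only shrinks when $t$ exceeds $q$ (a factor $\alpha^{-1}$) or when $q$ is raised (a factor $|\beta|/\alpha<1$). As $T^\dagger$ preserves inner products it preserves the relation ``$Q,R$ is a minimal pair''. This is the tool that lets me recognise measurement fibres.

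The remaining case, where $Q$ and $R$ differ by a \emph{measurement}, is the crux, because there $\innerp{Q}{R}=\alpha^{N-1}/K^{2N}$ is \emph{not} determined by $t$: when $K=M-1$ one has $\alpha=(M-1)^2=\beta^2$, so a one-site measurement change shares its inner product with a two-site double-outcome change, and no pairwise argument can separate them. This is exactly why the $M=K=2$ Hamming-distance argument of the main text does not transfer verbatim, and I expect it to be the main obstacle. To circumvent it I would exploit linearity. Fix the site $i$ at which $Q,R$ differ, and for each measurement $m$ available at site $i$ form the $K$ effects obtained from the common off-site-$i$ value by placing the outcomes of $m$ at site $i$; these are pairwise minimal pairs, and they sum to the same vector $F$ (equal to the common value of $Q,R$ off site $i$ and to $\eins$ at site $i$), since $\sum_k X^i_m(k)=\eins$ independently of $m$. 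By the previous paragraph $T^\dagger$ sends each such set to $K$ pairwise-minimal effects; a set of $K$ pairwise-minimal effects must agree off a single common site and exhaust the $K$ outcomes of one measurement there, hence is again a measurement fibre, whose sum is $T^\dagger F$ by linearity.

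Finally I would recombine the $M$ fibres of the line through $Q,R$. They all sum to the same $F$, so all their images sum to the same vector $T^\dagger F$. The closing step is a rigidity statement: a vector that equals $\eins$ on one site and an extremal effect on every other site determines the location of that $\eins$, because $\eins=\sum_k X_m(k)$ is not proportional to any single extremal effect, so two such tensors coincide only if the $\eins$ sits at the same site (and then the effects on the remaining sites agree). Hence all $M$ image fibres share one site $j$ and agree off it, so the entire line through $Q,R$ is carried into the line at site $j$; injectivity and equal cardinality upgrade this to a bijection of lines. In particular $T^\dagger Q$ and $T^\dagger R$ lie in a common line and are distinct, so they differ at exactly one site. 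I expect the delicate points to be the identification of image fibres via the minimal-pair relation and this final rigidity argument; the rest is the inner-product bookkeeping already set up by the preceding lemmas.
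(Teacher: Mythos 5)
Your proof is correct, but it takes a genuinely different route from the paper's in the crucial case. Both arguments treat the ``same measurement, different outcome'' case identically: such pairs realize the minimum $(1-M)\left(1+M(K-1)\right)^{N-1}/K^{2N}$ of the inner product on distinct elements of $\P^{(N)}$, attained only at such pairs, and orthogonality preserves it (your verification that the minimum is attained exactly at $t=q=1$ is a welcome addition, since the paper only asserts this). The divergence is in the ``different measurements'' case. The paper uses a cone-membership invariant: for such a pair, $\eins^{(N)}-Q-R\notin\K^{(N)}$, and since $T^\dagger$ fixes $\eins^{(N)}$ and maps $\K^{(N)}$ bijectively onto itself, the image pair must also involve different measurements on at least one factor; the preserved inner product $\left(1+M(K-1)\right)^{N-1}/K^{2N}$ is then the maximum compatible with that constraint, which forces a single differing site. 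You avoid the cone altogether: you upgrade the minimal-pair relation to a characterization of measurement fibres (sets of $K$ pairwise-minimal effects), use linearity of $T^\dagger$ to conclude that image fibres sum to $T^\dagger F$, and invoke tensor-product rigidity (that $\eins$ is proportional to no extremal effect) to force all $M$ image fibres of a line into a single line. Your observation that no argument based on the pair's inner product alone can work here|e.g.\ for $K=M-1$ one has $1+M(K-1)=(M-1)^2$, so a one-site measurement change has the same inner product as a two-site double-outcome change|identifies precisely the degeneracy that makes the paper reach for the cone invariant. What each buys: the paper's case~2 is shorter but leans on the duality facts behind Lemma~2; your argument uses only orthogonality, linearity, and the permutation of $\P^{(N)}$, and yields slightly more, namely that $T^\dagger$ maps lines to lines and hence induces a well-defined action on sites, which is essentially the structure that the subsequent Hamming-distance lemma (Lemma~\ref{lem:8}) exploits. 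Two minor caveats: the phrase ``no pairwise argument can separate them'' overstates the point, since the paper's cone criterion is itself a pairwise argument; and your fibre characterization silently uses that every measurement has the same number $K$ of outcomes, which is legitimate in the setting of Theorem~1 but worth flagging.
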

\begin{proof}
  Since $Q$ and $R$ factorize, we can compute the inner product
  $\langle Q,R\rangle$ termwise. Let $i$ be the site where $Q$ and $R$
  differ, and let $Q_i$ and $R_i$ be the corresponding factors.  

  First, consider the case that $Q_i$ and $R_i$ represent different
  outcomes of the \emph{same} measurement. Then, the inner product is
  the negative value
  \[
  K^{2N}\langle Q,R\rangle=(1-M)\left(1+M(K-1)\right)^{N-1}
  \]
  which is the smallest value that can possibly be attained. Hence
  $K^{2N}\langle T^\dagger Q,T^\dagger R\rangle$ has the same value,
  such that $T^\dagger Q$ and $T^\dagger R$ also differ at a single
  site only (where they refer to different outcomes of the same
  measurement).

  The alternative case is where $Q_i$ and $R_i$ represent outcomes of
  \emph{different} measurements.  Note that $\eins-Q_i-R_i\notin\K$
  and hence $\eins^{(N)}-Q-R\notin\K^{(N)}$.  Since $T^\dagger$ preserves
  $\eins^{(N)}$ and maps the cone bijectively to itself (cf.\ Lemma~2), we
  have $\eins^{(N)}-T^\dagger Q-T^\dagger R\notin\K^{(N)}$, from which it
  follows that $T^\dagger Q$ and $T^\dagger R$ correspond to outcomes
  of different measurements on at least one factor.  Furthermore,
  $$K^{2N}\langle Q,R\rangle=\left(1+M(K-1)\right)^{N-1}$$ is preserved.
  This is the largest value that can be attained subject to the
  constraint that they represent outcomes of different measurements on
  at least one factor.  It follows that $T^\dagger Q$ and $T^\dagger
  R$ are identical in all but one tensor factor.
\end{proof}

The proof of Theorem 1 is now completed using some properties of the
Hamming distance.  The list of local effects (listing the measurement-outcome
pairs at the successive sites) can be used to form a string in
$\mathbb{Z}_d^N$, where $d=MK$.  The Hamming distance between two
strings, $Q$ and $R$, is defined by $$d_H(Q,R):=|\{i:Q_i\neq R_i\}|.$$
Lemma~\ref{lem:6} shows that if $Q,R\in\P^{(N)}$ are two arbitrary
extremal effects with $d_H(Q,R)=1$, then the transformed effects
satisfy $d_H(T^\dagger Q,T^\dagger R)=1$.

In fact, all reversible operations that preserve Hamming distance 1
preserve the Hamming distance between all effects.  Furthermore,
the set of Hamming distance preserving transformations can be
expressed as combinations of permutations of subsystems and local
permutations (see for example Theorem~3.54 of~\cite{HammingDistance}).
We give a proof of this for completeness.

\begin{lemma} \label{lem:8} Let $\A$ be a finite alphabet, with $\A^N$
  the set of length-$N$ words.  Further, let $G_\Pi\cong S_N$ be the
  set of permutations of letters and $G_L \cong S_{|A|}^N$ be the
  group of local transformations of $\A^N$, which act independently at
  each position.
	
  Assume that $T^\dagger:\A^N\to\A^N$ is
  invertible. If $T^\dagger$ has the property that for all
  $s,t\in\A^N$, $d_H(s,t)=1\implies d_H(T^\dagger s,T^\dagger
  t)=1$, then $T^\dagger$ is a composition of operations from $G_\Pi$
  and $G_L$.
\end{lemma}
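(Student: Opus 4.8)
The plan is to view $\A^N$ as the vertex set of the Hamming graph $\Gamma$, in which two words are joined by an edge precisely when they are at Hamming distance $1$, and to identify $\langle G_\Pi, G_L\rangle$ with $\mathrm{Aut}(\Gamma)$. First I would upgrade the one-sided hypothesis to a genuine graph automorphism. Since $T^\dagger$ is a bijection of the \emph{finite} vertex set that carries edges to edges, the induced map on the (finite) edge set is injective and hence bijective, so $(T^\dagger)^{-1}$ preserves edges as well. Because graph distance in $\Gamma$ coincides with $d_H$ (a shortest path changes one differing letter at a time), it then follows that $T^\dagger$ preserves $d_H$ on all pairs of words, not merely on distance-$1$ pairs. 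This reduces the problem to determining $\mathrm{Aut}(\Gamma)$.

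Next I would normalize $T^\dagger$ by composing it with suitable elements of $G_L$ and $G_\Pi$ so that the resulting map $\tilde T$ fixes the all-zero word $\mathbf{0}$ together with every word of weight $1$ (the weight being the number of nonzero letters). Composing with a local permutation arranges $\tilde T(\mathbf{0})=\mathbf{0}$, after which the weight-$1$ words, being the neighbours of $\mathbf{0}$, are permuted among themselves. The key structural observation is that two weight-$1$ words are adjacent iff they differ in the same coordinate, so the neighbourhood of $\mathbf{0}$ decomposes into $N$ cliques of size $|A|-1$, one per coordinate direction. As $\tilde T$ respects adjacency it permutes these cliques, and composing with the corresponding element of $G_\Pi\cong S_N$ fixes each direction setwise; a further local permutation fixing $0$ at each site then pins down the $|A|-1$ nonzero values within each direction. (For $|A|=2$ the cliques are single vertices, and the direction permutation is realized directly by $G_\Pi$.) After these adjustments $\tilde T$ fixes $\mathbf{0}$ and all weight-$1$ words, and every normalizing factor lies in $\langle G_\Pi,G_L\rangle$.

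Finally I would prove $\tilde T=\mathrm{id}$ by induction on weight, which I expect to be the crux. Assuming $\tilde T$ fixes every word of weight $\le w$, take $s$ of weight $w+1\ge 2$ and select two of its nonzero coordinates $i\neq j$; the words $u,u'$ obtained from $s$ by zeroing coordinate $i$, respectively $j$, have weight $w$ and are therefore fixed. They lie at distance $2$, and a short count shows that any distance-$2$ pair in $\Gamma$ has exactly two common neighbours; here these are $s$ itself (weight $w+1$) and the word $v$ obtained from $s$ by zeroing both $i$ and $j$ (weight $w-1$). Since $\tilde T$ fixes $u$ and $u'$ it permutes $\{s,v\}$, and as $v$ has weight $w-1\le w$ it is fixed, forcing $\tilde T(s)=s$. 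This closes the induction, so $\tilde T=\mathrm{id}$ and $T^\dagger$ equals the inverse of the normalizing element, which lies in $\langle G_\Pi,G_L\rangle$. The main obstacle is precisely this last interplay—exhibiting the two common neighbours of a distance-$2$ pair and exploiting their \emph{differing weights} to single out $s$—together with the clique argument that correctly extracts the $S_N$ factor; the remaining steps are bookkeeping.
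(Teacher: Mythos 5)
Your proof is correct and follows essentially the same route as the paper's: normalize $T^\dagger$ by local operations and a permutation of sites so that it fixes a base word and all words at Hamming distance one from it, then induct on weight, using the fact that a weight-$(w+1)$ word is the unique common neighbour of two fixed weight-$w$ words other than a (already fixed) lower-weight word. Your graph-theoretic dressing (Hamming graph, clique decomposition of the neighbourhood, the two-common-neighbours count) just makes explicit what the paper's terser argument leaves implicit, and is a valid elaboration rather than a different approach.
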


\begin{proof}
  Choose an arbitrary set $a_i\in A$, for $i=1,\dots,N$. Set
  $s=(a_1,\dots, a_N)$.  Left-multiplying $T^\dagger$ by a local
  operation if necessary, we may assume that $T^\dagger(s)=s$.

  For $i\in1\dots N$, consider the set, $L_i$ of words of the form
  \begin{equation*}
    L_i =(a_1,\dots, a_{i-1}, A_i, a_{i+1},\dots, a_N),
  \end{equation*}
  i.e.\ $s$ and strings that differ from $s$ only at position
  $i$. Because the elements of $L_i$ all have mutual Hamming distance
  equal to one, there must be a function $\pi$ such that
  $T^\dagger(L_i)=L_{\pi(i)}$. Since $T^\dagger$ is invertible, $\pi$ is a
  permutation, which may be thought of as an element of $G_\Pi$.
  Because $(\pi^{-1}) T^\dagger(L_i)=L_i$ for all $i$, there is no loss of
  generality in assuming that $T^\dagger$ takes $L_i$ to itself.  Employing
  yet another local operation if necessary, we may even assume that
  $T$ acts like identity on all elements of $L_i$, and hence on all
  strings with Hamming distance 1 to $s$.

  Define the \emph{weight} of an element $t\in \A^N$ to be
  $\operatorname{wt}(t)=d_H(t,s)$.  What we have shown so far amounts
  to the fact that $T^\dagger$ fixes all words of weight zero and one.
  Next, we prove by induction that $T^\dagger$ fixes the words of any
  weight $w$ (and hence all of $\A^N)$.

  Assume the claim has been established for weights up to $w-1$. If
  $t$ has weight $w>1$, it is uniquely specified by the $w$ words
  $r_i$ which have weight $\operatorname{wt}(r_i)=w-1$ and Hamming
  distance $d_H(r_i,t)=1$ to $t$ (in fact, any two words $r_i, r_j$
  from this set are sufficient to specify $t$). But since the weights
  of all of the $r_i$, and $d_H(r_i,t)$ are preserved by $T^\dagger$
  by the induction hypothesis, $T^\dagger$ must fix $t$.
\end{proof}

It follows that all transformations in boxworld can be formed by
composing subsystem permutations and local permutations.  However, the
set of allowed local permutations is further restricted (Hamming
distance preservation is a necessary but not sufficient condition on
$T^\dagger$):
\begin{lemma}\label{lem:9}
  The only local reversible operations allowed in boxworld are relabellings of
  measurements and their outcomes (separately for each measurement).
  Furthermore, all possible local relabellings are allowed transformations,
  regardless of the total number of subsystems, $N$.
\end{lemma}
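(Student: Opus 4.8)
The plan is to prove the two assertions separately: first, \emph{necessity} — every local reversible operation must be a relabelling; and second, \emph{sufficiency} — every relabelling is a genuine reversible allowed transformation, for all $N$.

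For the necessity direction, I would work on a single site and invoke the two structural facts already established. By Lemma~2 an adjoint reversible transformation $T^\dagger$ permutes the extremal effects $\P^1=\{X_m(k)\}$, and by the orthogonality statement of the preceding lemma it preserves all inner products among them. The crucial observation is that the inner-product formula takes three distinct values once $M\geq 2$: the norm squared $(1+M(K-1))/K^2$, the value $1/K^2$ for effects of different measurements, and the negative value $(1-M)/K^2$ for distinct outcomes of the same measurement. Since these are pairwise distinct, the relation ``$X_m(k)$ and $X_{m'}(k')$ are outcomes of the same measurement'' is exactly detected by the inner product being $(1-M)/K^2$, and is therefore preserved by $T^\dagger$. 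Consequently $T^\dagger$ maps each of the $M$ measurement-classes onto a measurement-class and, within each, permutes the $K$ outcomes; this is precisely a relabelling of measurements together with an outcome relabelling chosen separately for each measurement.

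For the sufficiency direction, I would start from an arbitrary relabelling, given by a permutation $\sigma$ of the $M$ measurements and permutations $\rho_m$ of the $K$ outcomes (one for each measurement), and define a linear map by $X_m(k)\mapsto X_{\sigma(m)}(\rho_m(k))$. Because the $MK$ vectors $X_m(k)$ are linearly dependent — they satisfy the $M$ relations $\sum_k X_m(k)=\eins$ of Eq.~\eqref{eq:decomp} — the first thing to check is that this prescription is consistent, i.e.\ that it respects these relations; this holds because $\rho_m$ merely permutes the summands, so each relation is sent to another relation of the same form and $\eins\mapsto\eins$. I would then verify orthogonality: the displayed inner-product formula is invariant under applying $\sigma$ and the $\rho_m$ (a bijection of measurements stays a bijection, and same/different outcomes stay same/different), so inner products are preserved on the spanning set $\P^1$ and the map is therefore orthogonal. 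Finally, tensoring with the identity on the remaining sites yields a $T^\dagger$ that sends every product extremal effect to a product extremal effect, hence maps $\P^{(N)}$, and thus the cone $\K^{(N)}$, onto itself while fixing $\eins^{(N)}$; by the duality argument of Lemma~2 applied in reverse, the associated $T$ then maps $\S^{(N)}$ into itself, and the same reasoning applied to the inverse shows $T$ is reversible. Since the extension is the same tensor construction for every $N$, allowedness is manifestly independent of the number of subsystems.

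The main obstacle I anticipate lies in the sufficiency direction, and specifically in the bookkeeping forced by the linear dependence of the effect vectors: one cannot simply declare an arbitrary permutation of $\P^1$ to be a linear map, and one must check that the relabelling is compatible with the relations $\sum_k X_m(k)=\eins$ before orthogonality or cone-preservation can even be discussed. Once well-definedness is secured, the remaining verifications — orthogonality from the invariance of the inner-product values, and cone-preservation from the fact that product extremal effects are sent to product extremal effects — are routine.
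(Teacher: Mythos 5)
Your proof is correct, but it proceeds genuinely differently from the paper's in both directions. For \emph{necessity}, the paper never invokes the inner-product invariants: it observes that the relations $\sum_k X_m(k)=\eins$ are the \emph{only} ways of decomposing $\eins$ into extremal effects (a decomposition mixing measurements $m\neq m'$ would yield a state $s$ with $\langle X_m(k),s\rangle=\langle X_{m'}(k'),s\rangle=1$, hence $\langle\eins,s\rangle\geq 2$, a contradiction), and then applies $T^\dagger\eins=\eins$ to $\sum_k X_m(k)$ to conclude that $T^\dagger$ maps each measurement class onto a measurement class. Your alternative|that orthogonality preserves the three pairwise-distinct inner-product values, of which $(1-M)/K^2$ exactly characterizes ``distinct outcomes of the same measurement'' once $M\geq 2$|is equally valid, though it is tied to the particular orthogonal representation, whereas the paper's decomposition argument is representation-free and carries over verbatim to outcome numbers $K(m)$ that vary with the measurement. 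For \emph{sufficiency}, the paper sidesteps the well-definedness issue you flag by constructing the relabellings concretely in the representation~\eqref{eq:repXm}: measurement relabellings permute the unit vectors $e_m$ of $\R^M$, and outcome relabellings act as symmetries of the $(K-1)$-simplex with vertices $w_k$, so linearity and orthogonality are manifest; your abstract route (check that the prescription annihilates the relation space spanned by $\sum_k X_m(k)-\sum_k X_{m'}(k)$, then verify inner-product preservation on the spanning set) is a sound substitute, and you correctly identify well-definedness as the one point requiring care. For the extension to $N$ sites, the paper argues that relabellings preserve the no-signalling, positivity and normalization constraints, while you run the dual-cone argument of Lemma~2 in reverse; both work. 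The only gap worth noting is minor: your fixed-$K$ argument does not cover the case of measurement-dependent outcome numbers $K(m)$ (where only measurements with $K(m)=K(m')$ may be exchanged), which the paper's proof handles explicitly, though this generality is not needed for Theorem~1.
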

\begin{proof}
  Recall~\eqref{eq:decomp} and note that these are the only
  combinations of extremal effects that sum to $\eins$:
  otherwise the identity $\eins$ could be decomposed into a sum
  involving two effects $X_m(k)$ and $X_{m'}(k')$ with $m\neq m'$, so
  there would be a state $s$ with $\langle X_m(k),s\rangle=\langle
  X_{m'}(k'),s\rangle=1$, for which $\langle \eins,s\rangle\geq 2$, a
  contradiction.
  
  Consider a measurement $m$, then
  \[
     \eins=T^\dagger \eins=T^\dagger \sum_k X_m(k)=\sum_k T^\dagger X_m(k),
  \]
  so each member of $\{T^\dagger X_m(k):0\leq k \leq K(m)-1\}$ must correspond
  to the same measurement. Hence all local reversible adjoint transformations
  permute the measurements, and, for each measurement separately, permute the
  outcomes.
  
  To see that all permutations are allowed if the number of outcomes
  $K$ is the same for every measurement $m$, note that the
  representation of $X_m(k)$ as in~(\ref{eq:repXm}) on $(\R^M\otimes
  \R^{K-1})\oplus\R$ permits that all those permutations are
  implemented as allowed linear (hence orthogonal) transformations:
  relabelling the measurements corresponds to permuting the standard
  unit vectors $e_m$ of $\R^M$ (constituting an $M$-dimensional
  irreducible representation of the symmetric group $S_M$), while
  relabelling the outcomes corresponds to symmetry transformations of
  the $(K-1)$-simplex in $\R^{K-1}$ with $K$ vertices $w_k$ (a
  $(K-1)$-dimensional irreducible representation of $S_K$).
  
  In the case that the number of outcomes $K=K(m)$ depends on the
  measurement $m$, the vector space carrying the local effects will
  analogously be $\bigoplus_{m=0}^{M-1} \R^{K(m)-1} \oplus\R$.  This
  allows us to represent the permutation of outcomes linearly, as
  before, while the permutations of measurements $m$ and $m'$ with
  $K(m)=K(m')$ correspond to permutations of direct summands.
  
  We have thus proven that every local relabelling transformation
  $T^\dagger$ is an allowed transformation in boxworld. It remains to
  show that $T^\dagger$ is allowed if the single system is coupled to
  others (i.e.\ that $T^\dagger\otimes\Id$ is an allowed
  transformation). (The analogue in quantum theory is to prove
  complete positivity).  This follows from the fact that local
  relabellings preserve the no-signalling, positivity and
  normalization constraints, such that $T\otimes\Id$ maps the
  no-signalling polytope (that is, the state space) onto itself.
\end{proof}

In the final part of the appendix, we prove that a state $s$ is a pure
product state if and only if all the probabilities $\langle
A,s\rangle$ are either $0$ or $1$ with respect to extremal effects,
$A$ (in the most general case that the number of measurements and
outcomes varies from site to site). This has been used in the proof of
Theorem 2.\\

\begin{lemma}
\label{lem:extremal01}
Let $s\in S^{(N)}$ be a normalized state on $N$ arbitrary boxworld
subsystems (some of which may be classical).  Then, $s$ is a pure
product state if and only if $\langle A,s\rangle\in\{0,1\}$ for all
extremal effects $A=A_1\otimes\ldots\otimes A_N\in\P^{(N)}$.
\end{lemma}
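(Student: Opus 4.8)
The plan is to prove the two implications separately, extracting the single-site structure first since it governs both directions.

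First I would record the auxiliary fact that a single-site state $s_i$ is pure if and only if $\langle A_i,s_i\rangle\in\{0,1\}$ for every single-site extremal effect $A_i\in\P^i$. The reason is that a single-site state is merely a choice of outcome distribution for each of the measurements, constrained only by normalisation per measurement (there is no nontrivial no-signalling condition on one site). Hence the single-site state space is a Cartesian product of simplices, whose extreme points are exactly the products of vertices, i.e.\ the deterministic assignments; for these $\langle X_m(k),s_i\rangle\in\{0,1\}$, and conversely any state with all values in $\{0,1\}$ is such a vertex.

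For the forward implication I would take $s=s_1\otimes\ldots\otimes s_N$ with each $s_i$ pure. Since the extremal effects factorise, $\langle A,s\rangle=\prod_{i=1}^N\langle A_i,s_i\rangle$, and by the auxiliary fact each factor lies in $\{0,1\}$; a product of such numbers is again in $\{0,1\}$. For the converse I would assume $\langle A,s\rangle\in\{0,1\}$ for all $A\in\P^{(N)}$ and fix a setting $(A_1,\ldots,A_N)$. Tensoring the decomposition $\sum_k X_m(k)=\eins$ of Eq.~\eqref{eq:decomp} gives $\eins^{(N)}=\sum A_1\otimes\ldots\otimes A_N$ (summed over outcomes), so the joint probabilities $\langle A,s\rangle$ sum to $\langle\eins^{(N)},s\rangle=1$. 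Being $\{0,1\}$-valued and summing to one, exactly one outcome tuple occurs: the joint distribution is deterministic for every setting. Summing that joint over all sites but $i$, and again using $\sum_k X_m(k)=\eins$, collapses the marginal at site $i$ to $\langle\eins\otimes\ldots\otimes A_i\otimes\ldots\otimes\eins,s\rangle$, which does not reference the other measurements. Consequently the deterministic outcome at site $i$ is a function $f_i(A_i)$ of the local setting alone. Defining $s_i$ to be the single-site pure state realising $f_i$, the product $s_1\otimes\ldots\otimes s_N$ reproduces every $\langle A,s\rangle$. As the extremal effects span the whole space (the relations $\sum_k X_m(k)=\eins$ being the only dependencies), a state is determined by these inner products, so $s=s_1\otimes\ldots\otimes s_N$ is a pure product state.

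The hard part will be the step in the converse that uses no-signalling to deduce that the deterministic local outcome depends only on $A_i$. Determinism of the joint distribution by itself would allow correlated ``deterministic boxes'' whose local outcomes react to remote settings; it is exactly the no-signalling property built into $\S^{(N)}$ that forbids this and forces the locally deterministic, hence product, structure. The remaining ingredients---the product-of-simplices description of a single site and the spanning property of $\P^{(N)}$---are routine.
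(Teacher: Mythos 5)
Your proof is correct and follows essentially the same route as the paper's: both use the tensored decomposition of the identity to show that each measurement setting has a unique deterministic outcome tuple, then show the outcome at each site depends only on the local setting, and conclude by constructing a pure product state and invoking the spanning property of the extremal effects. The only presentational difference is that you establish locality of the outcome function directly from marginal independence (the marginal at site $i$ equals $\langle \eins\otimes\ldots\otimes A_i\otimes\ldots\otimes\eins,s\rangle$, which ignores remote settings), whereas the paper derives the contradiction $1\geq 2$ from a hypothetical violation---the same mechanism, phrased positively rather than by contradiction.
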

\begin{proof}
  If $s=s_1\otimes\ldots\otimes s_N$ is a product of pure states, then
  $\langle A_i,s_i\rangle\in\{0,1\}$ for every $i$, such that $\langle
  A,s\rangle$ is either $0$ or $1$. It remains to prove the converse.
  Suppose that $s$ is any state with $\langle A,s\rangle\in\{0,1\}$
  for all $A\in\P^{(N)}$. The idea is to construct a pure product
  state $\tilde s$ with $\langle A,s\rangle=\langle A,\tilde s\rangle$
  for all $A$, which proves that $s=\tilde s$.  To this end, note that
  the decomposition of the identity given in~\eqref{eq:id1} has the
  following generalization: if the $m_i$ are arbitrary local
  measurement devices, then
\begin{equation}\label{eq:id2}
   \eins^{(N)}=\sum_{k_1,\ldots,k_N} X_{m_1}(k_1)\otimes X_{m_2}(k_2)\otimes \ldots \otimes X_{m_N}(k_N),
\end{equation}
where the sum is over all outcomes (the number of outcomes may depend
on the measurement). It follows that
\[
   1=\sum_{k_1,\ldots,k_N}\langle X_{m_1}(k_1)\otimes X_{m_2}(k_2)\otimes \ldots \otimes X_{m_N}(k_N),s\rangle,
\]
so exactly one of the addends must be $1$, while all others are
$0$. Hence, to every string $\mathbf{m}=(m_1,m_2,\ldots,m_N)$
describing local choices of measurements, there is a unique string of
corresponding outcomes $\mathbf{k}(\mathbf{m}) =(k_1,k_2,\ldots,k_N)$
such that $\langle X_{m_1}(k_1)\otimes\ldots\otimes
X_{m_N}(k_N),s\rangle=1$, while the inner product is $0$ for all other
outcome combinations. It remains to show that each $k_i$ only depends
on $m_i$; in this case, we can construct a state $\tilde s=\tilde
s_1\otimes\ldots\otimes\tilde s_N$, generating the same probabilities
as $s$, factor by factor. So suppose there were strings $\mathbf{m}$
and $\mathbf{\tilde m}$ with $m_i=\tilde m_i$, but $k_i\neq \tilde
k_i$, (where $\tilde{\bf k}:={\bf k}(\tilde{\bf m})$) then
\begin{eqnarray*}
   1&=& \langle \eins\otimes\ldots\otimes\eins\otimes \sum_{k'}X_{m_i}(k')\otimes\eins\otimes\ldots\otimes \eins,s\rangle\\
   &\geq& \langle \eins\otimes\ldots\otimes\eins\otimes X_{m_i}(k_i)\otimes\eins\otimes\ldots\otimes\eins,s\rangle \\
   && + \langle \eins\otimes\ldots\otimes\eins\otimes X_{\tilde m_i}(\tilde k_i)\otimes\eins\otimes\ldots\otimes\eins,s\rangle \\
   &\geq& \langle X_{m_1}(k_1)\otimes\ldots\otimes X_{m_N}(k_N),s\rangle \\
   && + \langle X_{\tilde m_1}(\tilde k_1)\otimes\ldots \otimes X_{\tilde m_N}(\tilde k_N),s\rangle \\
   &=& 2
\end{eqnarray*}
where the last inequality follows by applying the decomposition of the
identity~\eqref{eq:id2} to $N-1$ factors.  This is a contradiction.
\end{proof}

\end{document}